\title{Randomized Communication Without Network Knowledge}
\author{Artur Czumaj \and Peter Davies}
\newtheorem{theorem}{Theorem}
\newtheorem{corollary}[theorem]{Corollary}
\newtheorem{lemma}[theorem]{Lemma}
\newcommand{\nat}{\ensuremath{\mathbb{N}}\xspace}
\newcommand{\Prob}[1]{\mathbb{P}\left[#1\right]}
\newcommand{\ci}{\ensuremath{c_1}\xspace}
\newcommand{\cii}{\ensuremath{c_2}\xspace}
\newcommand{\ciii}{\ensuremath{c_3}\xspace}
\newcommand{\civ}{\ensuremath{c_4}\xspace}
\begin{document}

\maketitle

\begin{abstract}
	Radio networks are a long-studied model for distributed system of devices which communicate wirelessly. When these devices are mobile or have limited capabilities, the system is often best modeled by the ad-hoc variant, in which the devices do not know the structure of the network. A large body of work has been devoted to designing algorithms for the ad-hoc model, particularly for fundamental communications tasks such as broadcasting. Most of these algorithms, however, assume that devices have some network knowledge (usually bounds on the number of nodes in the network $n$, and the diameter $D$), which may not always be realistic in systems with weak devices or gradual deployment. Very little is known about what can be done when this information is not available.
	
	This is the issue we address in this work, by presenting the first \emph{randomized} broadcasting algorithms for \emph{blind} networks in which nodes have no prior knowledge whatsoever. We demonstrate that lack of parameter knowledge can be overcome at only a small increase in running time. Specifically, we show that in networks without collision detection, broadcast can be achieved in $O(D\log\frac nD\log^2\log\frac nD + \log^2 n)$ time, almost reaching the $\Omega(D\log\frac nD + \log^2 n)$ lower bound. We also give an algorithm for directed networks with collision detection, which requires only $O(D\log\frac nD\log\log\log\frac nD + \log^2 n)$ time.
\end{abstract}

\section{Introduction}

\subsection{Model and problem}
We study the classical model of \emph{multi-hop radio networks}.

\paragraph{Multi-hop radio networks.}
In this model, a communications network is represented as a (directed or undirected) graph, with nodes corresponding to devices with wireless capability. An edge $(u,v)$ in the graph means that device $u$ can reach device $v$ via direct transmission. Efficiency of algorithms is measured in terms of number of nodes $n$ in the network, and eccentricity $D$ (the distance between the furthest pair of nodes in the network).

The defining feature of radio networks is the rule for how nodes can communicate: time is divided into discrete synchronous steps, and in each step every node can choose whether to transmit a message or listen for messages. A listening node in a given time-step then hears a message iff exactly one of its in-neighbors transmits. In the model \emph{with collision detection}, a listening node can distinguish between the cases of having 0 in-neighbors transmit and having more than one, but in the model \emph{without collision detection} these scenarios are indistinguishable. We study both variants of the model.

\paragraph{Node knowledge.}
We are concerned with the \emph{ad-hoc} variant of the \emph{multi-hop radio network} model, which means that we assume nodes have no prior knowledge about network structure. However, it is usual for work on ad-hoc networks to assume that nodes do know the values of $n$ and $D$, or at least upper bounds thereof. We do not make this assumption, and thus are dealing with a more restrictive model, which we call \emph{blind} radio networks, in which nodes have no prior network knowledge whatsoever. We do assume that nodes have access to a \emph{global clock}, which tells them the absolute number of the current time-step. Our algorithm for the model \emph{without collision detection}, though, does not require this as an extra assumption, since nodes only participate once they have received the source message, and so a global clock can be simulated by appending the current time-step to the source message.

\paragraph{Task.}
We design \emph{randomized algorithms} for the task of \emph{broadcasting}. This is the most fundamental global communication task, in which a single designated \emph{source} node starts with a message, and must inform all nodes in the network via transmissions. We assume that all nodes except the source begin in an \emph{inactive} state (i.e. do not transmit), and become active when they are informed of the source message via a transmission from a neighbor.

Our algorithms will all be Monte-Carlo algorithms succeeding with \emph{high probability}. That is, we will give worst-case running times, and ensure that the failure probability is at most $n^{-c}$ for some $c>0$.

\subsection{Preliminaries.}
Since we are only concerned with the asymptotic performance of our algorithm, we will assume that $n$ is at least a sufficiently large constant throughout. To avoid negative terms when using logarithms, we will use $\log x$ to mean $\max\{\log_2 x,1\}$. We will use $\ci$, $\cii$, $\ciii$ \dots as sufficiently large constants whose value we will set at some point during the analysis.

\subsection{Related work}
Broadcasting is possibly the most studied problem in radio networks, and has a wealth of literature in various settings. In the most standard model of ad-hoc networks, that of networks {without collision detection}, the first major result was a seminal paper of Bar-Yehuda et al.\ \cite{-BGI92}, who designed an almost optimal randomized broadcasting algorithm achieving the running time of $O((D + \log n) \cdot \log n)$ with high probability. This bound was later improved by Czumaj and Rytter \cite{-CR06}, and independently Kowalski and Pelc \cite{-KP03b}, who gave randomized broadcasting algorithms that complete the task in $O(D \log \frac{n}{D} + \log^2 n)$ time with high probability. This running time matched a known $\Omega(D\log\frac nD + \log^2 n)$ lower bound for the task \cite{-ABLP91,-KM98}. All of these results hold for \emph{directed} networks as well as \emph{undirected} ones. 

More recently, Ghaffari, Haupler and Khabbazian \cite{-GHK13} showed that collision detection can be used to surpass this lower bound, attaining an $O(D+\log^6 n)$ time algorithm. Work by Haeupler and Wajc \cite{-HW16} demonstrated that even without collision detection, the lower bound could be beaten assuming \emph{spontaneous transmissions} were permitted; that is, nodes have access to a global clock and are allowed to transmit before receiving the source message. Czumaj and Davies \cite{-CD17} extended this approach and obtained a running time of $O(D\frac{\log n}{\log D}+\log^{O(1)} n)$ for the setting with spontaneous transmissions. However, these algorithms only work in undirected networks.

Deterministic algorithms for broadcasting have also been studied; for undirected networks the fastest known algorithm is the $O(n \log D)$-time algorithm of \cite{-K05}, while for directed networks it is the $O(n \log D\log\log D)$-time algorithm of \cite{-CD16}.

All of these results also \emph{intrinsically require parameter knowledge}, and algorithms that do not require such knowledge have been little studied. The closest analogue in the literature is the work of Jurdzinski and Stachowiak \cite{-JS05}, who give algorithms for wake-up in single-hop radio networks (those in which the underlying graph is a clique, i.e. $D=1$) under a wide range of node knowledge assumptions. Their Use-Factorial-Representation algorithm is the most relevant; the running time is given as $O((\log n\log\log n)^3)$ for high-probability wake-up with a global clock (a slightly stronger task than broadcasting) in single-hop networks, but a similar analysis as we present here would demonstrate that the algorithm also performs broadcasting in multi-hop networks in $O((D+\log n)\log^2 \frac nD \log^3\log \frac nD)$ time.

A deterministic algorithm for broadcasting in radio networks without parameter knowledge is given in \cite{-CD18a}, with a running time of $O(n\log L \log\log n)$, where $L$ is the range of unique IDs with which nodes are equipped.

\subsection{New results}
We present a randomized algorithm for broadcasting in (directed or undirected) networks without collision detection which succeeds with high probability within time $O(D\log \frac nD\log^2\log\frac nD + \log^2 n)$. This improves over the \sloppy{$O((D+\log n)\log^2 \frac nD \log^3\log \frac nD)$} time that could be obtained by applying our analysis method to the Use-Factorial-Representation algorithm of Jurdzinski and Stachowiak \cite{-JS05}, and comes within a poly-$\log\log$ factor of the $\Omega(D\log\frac nD + \log^2 n)$ lower bound.

We also present an algorithm for directed networks with collision detection, whose $O(D\log \frac nD\log\log\log\frac nD + \log^2 n)$ running time comes even closer to the lower bound (we note that, to the authors' knowledge, it has not been proven that the lower bound still holds in this setting, though it would be very surprising if it did not).

Finally, we make the observation that in undirected networks with collision detection, the $O(D+\log^6 n)$-time algorithm of \cite{-GHK13} can be simulated without parameter knowledge at no extra cost.


\section{Algorithms}

\subsection{Outline of Algorithms and Analysis}

The main idea of our algorithms is as follows: when considering a particular node $v$ we wish to inform, all of its active in-neighbors will be transmitting with some probability. We wish to make the sum of these probabilities approximately constant (say $\frac 12$), since then we can use the following lemma (variants of which have been used in many previous works such as \cite{-CD16}) to show that $v$ will be informed with good probability:

\begin{lemma}\label{lem:colhit}
	Let $x_i$, $i\in [n]$, be independent $\{0,1\}$-valued random variables with $\Prob{x_i=1}\leq \frac 12$, and let $f =\sum_{i\in [n]} \Prob{x_i=1}$. Then $\Prob{\sum_{i\in [n]} x_i = 1} \geq f4^{-f}$.
\end{lemma}

\begin{proof}
	\begin{align*}
	\Prob{\sum_{i\in [n]} x_i = 1}
	&= \sum_{j\in [n]} \Prob{x_j = 1 \land x_i = 0 \forall i \neq j}
	\geq \sum_{j\in [n]} \Prob{x_j = 1}\cdot \Prob{x_i = 0 \forall i}\\
	&\geq f\cdot \Prob{x_i = 0 \forall i}
	= f\cdot \prod_{i\in [n]} (1- \Prob{x_i = 1})
	\geq f\cdot \prod_{i\in [n]} 4^{-\Prob{x_i = 1}} \\
	&= f\cdot 4^{-\sum_{i\in [n]} \Prob{x_i = 1}}
	=f4^{-f}
	\enspace.
	\qedhere
	\end{align*}
\end{proof}

However, we do not know the size of $v$'s active in-neighborhood, so choosing appropriate probabilities is difficult. To do so, we have the source node generate a global random variable for each time-step, which will function as a `guess' of in-neighborhood size. By appending these variables to the source message, we can ensure that all active nodes are aware of them. Then, based on these global variables and upon local randomness, the active nodes decide whether to transmit.

This is complicated by the fact that there is no `perfect' choice of transmission probabilities which works well for all network conditions. Our algorithms therefore require performing several different protocols simultaneously (using random time multiplexing). This results in a framework given by Algorithm \ref{alg:B0}.

\begin{algorithm}[H]
	\caption{Broadcast Framework}
	\label{alg:B0}
	\begin{algorithmic}
		\For {$t = 1$ to $\infty$}
		\State let $T=2^t$
		\State $s$ randomly generates a sequence $S\in [C]^{T}$ with independent uniformly chosen entries.
		\State for each $j\in[T]$, $s$ generates a random variable $x_j$ from distribution $Y_{S_j,T}$.
		\State $s$ appends $S$ and variables $x_j$ to the source message.
		\For {$j$ from $1$ to $T$, in time-step $j$,}
		\State active nodes $v$ transmit with probability $p_{S_j,T}(x_j)$.
		\EndFor	
		\State reset time-step numbers and set non-source nodes to inactive.
		\EndFor	
	\end{algorithmic}
\end{algorithm}

Here we have some constant number $C$ of different protocols, each of which are equipped with a distribution $Y$ which tells the source how to choose the global random variables, and a probability function $p$ which tells nodes how to use this (and parameter $T$) to determine their transmission probabilities. We are using $T$ as a doubling parameter which approximates the true time-step number; this is because the source cannot send an unbounded amount of information, and so must guess in advance how many time-steps to generate randomness for.

By analyzing these protocols we can obtain some bound on the probability that a node with active neighbors is informed, in each time-step. Then, the total amount of time we must wait to inform that node can be bounded by a geometric random variable. To sum this waiting time over all the nodes in a path, we can use the following lemma from \cite{-CR06} about the concentration of sums of independent geometric random variables:

\begin{lemma}[Lemma 3.5 of \cite{-CR06}]\label{lem:rv}
	
	Let $X_1,\dots,X_D$ be a sequence of independent integer-valued random variables, each $X_i$ geometrically distributed with parameter $p_i$, $0<p_i<1$. For every $i$, let $\mu_i = 1/p_i$, and let $M$ be the set of unique $\mu_i$, i.e. $M = \{\mu_i:1\leq i \leq D\}$. If $\sum_{i=1}^{d} \mu_i \leq N$, then for any positive real $\beta$,
	\[ \Prob{\sum_{i=1}^{D}X_i \leq 2\cdot N + 8 \ln (|M|/\beta)\cdot \sum_{z\in\Delta}z}\geq 1-\beta \]
\end{lemma}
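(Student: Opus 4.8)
The plan is to prove the bound by a standard exponential-moment (Chernoff) argument, but organized around the \emph{distinct} values in $M$ so that the deviation term collapses from $\sum_i \mu_i$ to $\sum_{z\in M} z$. (I read the threshold as $2N + 8\ln(|M|/\beta)\sum_{z\in M} z$ and the hypothesis as $\sum_{i=1}^{D}\mu_i \leq N$.) First I would partition the index set by the value of $\mu_i$: for each $z\in M$ let $S_z = \sum_{i:\mu_i=z} X_i$, so that $\sum_{i=1}^{D} X_i = \sum_{z\in M} S_z$ and each $S_z$ is a sum of $k_z$ independent geometric variables with common parameter $p=1/z$ (i.e.\ a negative binomial with mean $k_z z$). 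I would then bound each $S_z$ separately and finish with a union bound over the $|M|$ groups.

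The heart of the argument is a clean per-variable moment bound. For $X$ geometric with parameter $p$ (supported on $\{1,2,\dots\}$, mean $1/p$) and $s = p/2$, one has $\mathbb{E}[e^{sX}] = \frac{pe^{s}}{1-(1-p)e^{s}}$, and I claim this is at most $2$. Convergence of the series requires $(1-p)e^{s}<1$, which holds since $(1-p)e^{p/2}\leq e^{-p/2}<1$. The bound $\mathbb{E}[e^{sX}]\leq 2$ is equivalent to $e^{p/2}(2-p)\leq 2$, and this follows because $g(p):=e^{p/2}(2-p)$ satisfies $g(0)=2$ and $g'(p)=-\tfrac{p}{2}e^{p/2}\leq 0$, so $g$ is nonincreasing on $(0,1]$. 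Independence then gives $\mathbb{E}[e^{sS_z}]\leq 2^{k_z}$ with $s=1/(2z)$.

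Applying Markov's inequality to $e^{sS_z}$ at the threshold $a_z := 2k_z z + 8z\ln(|M|/\beta)$ yields $\Prob{S_z \geq a_z}\leq e^{-a_z/(2z)}\,2^{k_z} = (2/e)^{k_z}(\beta/|M|)^{4}$. Since $2/e<1$ and (assuming w.l.o.g.\ $\beta<1\leq|M|$, the statement being vacuous otherwise) $(\beta/|M|)^{4}\leq \beta/|M|$, each group fails with probability at most $\beta/|M|$. A union bound over the $|M|$ groups then gives that, with probability at least $1-\beta$, every $S_z < a_z$ holds simultaneously, whence $\sum_i X_i = \sum_{z\in M} S_z < \sum_{z\in M} a_z = 2\sum_{z\in M} k_z z + 8\ln(|M|/\beta)\sum_{z\in M} z \leq 2N + 8\ln(|M|/\beta)\sum_{z\in M} z$, using $\sum_{z\in M} k_z z = \sum_i \mu_i \leq N$.

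The main obstacle I anticipate is obtaining the moment bound with exactly the constant needed to produce the leading factor $2$ in front of $N$: a crude estimate such as $(1-p)e^{p/2}\leq e^{-p/2}$ loses too much and yields a constant strictly larger than $2$, so the sharp monotonicity inequality $e^{p/2}(2-p)\leq 2$ is essential. The remaining delicacy is purely bookkeeping: one must confirm the moment-generating series converges for the chosen $s$, and verify the two elementary simplifications $(2/e)^{k_z}\leq 1$ and $(\beta/|M|)^{4}\leq\beta/|M|$ that make the per-group failure probability small enough for the union bound to close.
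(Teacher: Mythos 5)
The paper imports this lemma from \cite{-CR06} and gives no proof of its own, so there is nothing internal to compare against; your argument must stand alone, and it does. Every step checks out: the moment-generating-function identity together with the convergence condition $(1-p)e^{p/2}<1$, the sharp inequality $e^{p/2}(2-p)\leq 2$ obtained from the monotonicity of $g$, the Markov step at threshold $a_z=2k_zz+8z\ln(|M|/\beta)$, and the union bound over the $|M|$ groups of equal means, which is precisely what produces the $\ln(|M|/\beta)\cdot\sum_{z\in M}z$ term (your reading of the statement's stray ``$\sum_{z\in\Delta}z$'' as $\sum_{z\in M}z$ and of ``$\sum_{i=1}^{d}$'' as $\sum_{i=1}^{D}$ is the intended one). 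The grouping-by-distinct-means-plus-Chernoff strategy is the natural route to a bound of this shape, and your handling of the edge cases ($\beta\geq 1$ trivial, $k_z\geq 1$ so $(2/e)^{k_z}\leq 1$) closes the argument cleanly.
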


\begin{corollary}\label{cor:rv}	
	Let $X_1,\dots,X_D$ be a sequence of independent integer-valued random variables, each $X_i$ geometrically distributed with parameter $1/\mu_i$, $\mu_i \in \nat$. Let $\mu_{max}$ be the maximum $\mu_i$. If $\sum_{i=1}^{D} \mu_i \leq N$, then for any positive $\beta \leq \frac {1}{\log \mu}$,
	\[ \Prob{\sum_{i=1}^{D}X_i \leq 4 N + 65 \mu\ln (1/\beta)}\geq 1-\beta \]
\end{corollary}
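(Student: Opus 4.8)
The plan is to obtain the corollary directly from Lemma~\ref{lem:rv}, which applies almost verbatim: each $X_i$ is geometric with parameter $1/\mu_i$ and $\sum_{i=1}^D\mu_i\le N$, so the lemma gives $\Prob{\sum_{i=1}^D X_i \le 2N + 8\ln(|M|/\beta)\cdot\mu_{max}}\ge 1-\beta$, where $M$ is the set of distinct means. All that remains is to show that, under the integrality hypothesis $\mu_i\in\nat$ and the assumption $\beta\le 1/\log\mu_{max}$, this right-hand side is at most the cleaner quantity $4N + 65\mu_{max}\ln(1/\beta)$. In other words, the work is entirely in simplifying the logarithmic factor $\ln(|M|/\beta)$ and then paying for the simplification out of the extra additive slack (we are allowed $4N$ in place of $2N$) and the larger multiplicative constant ($65$ in place of $8$).

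First I would exploit integrality. Since the $\mu_i$ are distinct positive integers no larger than $\mu_{max}$, there can be at most $\mu_{max}$ of them, so $|M|\le\mu_{max}$. Splitting $\ln(|M|/\beta)=\ln|M|+\ln(1/\beta)$, the assumption on $\beta$ is precisely what forces the first summand to be dominated by the second: it guarantees $\ln|M|=O(\ln(1/\beta))$, and hence $\ln(|M|/\beta)\le c\,\ln(1/\beta)$ for an absolute constant $c$. Substituting this bound turns the correction term $8\ln(|M|/\beta)\mu_{max}$ into $O(\mu_{max}\ln(1/\beta))$, after which it only remains to check that the constant $65$ is large enough to absorb $8c$. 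Combined with $2N\le 4N$ this yields the stated bound, and the verification of the constants is routine arithmetic.

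The main obstacle, and the only genuinely delicate point, is controlling the factor $\ln(|M|/\beta)$: the bound coming out of Lemma~\ref{lem:rv} carries a $\ln|M|$ overhead that does not appear in the target, and this overhead is harmless only because integrality caps $|M|$ while the hypothesis on $\beta$ makes $\ln(1/\beta)$ the dominant logarithm. If many distinct means were allowed with no constraint tying $|M|$ to $1/\beta$, the overhead could not be folded away, and one would instead have to bypass the lemma with a direct Chernoff/moment-generating-function estimate for the sum of geometrics (which avoids the $\ln|M|$ term altogether). A secondary technical point to dispatch is the case $\mu_i=1$: then $p_i=1$, which the strict inequality $0<p_i<1$ in Lemma~\ref{lem:rv} excludes; but such an $X_i$ equals $1$ deterministically, contributing the same amount to both sides, so it can simply be removed before invoking the lemma and added back afterward.
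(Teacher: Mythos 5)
Your plan of applying Lemma \ref{lem:rv} directly to the $X_i$ does not go through, for two related reasons. First, you have misquoted the lemma: its additive term is $8\ln(|M|/\beta)\cdot\sum_{z\in M}z$, the sum of the \emph{distinct} means, not $8\ln(|M|/\beta)\cdot\mu_{max}$. Since every integer value up to $\mu_{max}$ may occur among the $\mu_i$, the quantity $\sum_{z\in M}z$ can be as large as $\Theta(\mu_{max}^2)$, so it cannot simply be replaced by $\mu_{max}$. Second, the bound $|M|\leq\mu_{max}$ that you extract from integrality is far too weak for your simplification of the logarithm: the hypothesis $\beta\leq 1/\log\mu_{max}$ only gives $\ln(1/\beta)\geq\ln\log\mu_{max}$, whereas $\ln|M|$ could be as large as $\ln\mu_{max}$, which is exponentially bigger; hence the claim $\ln|M|=O(\ln(1/\beta))$ is false in general. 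The hypothesis on $\beta$ is calibrated for $|M|=O(\log\mu_{max})$, not for $|M|=O(\mu_{max})$.

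The missing idea, and the one the paper's proof uses, is to first round each $\mu_i$ up to the next power of two, replacing $X_i$ by a stochastically dominating geometric variable $X_i'$ with mean $\mu_i'\leq 2\mu_i$, and to apply Lemma \ref{lem:rv} to the $X_i'$. This costs only a factor $2$ in the linear term (whence $4N$ in place of $2N$) but collapses the set of distinct means to a subset of $\{2,4,\dots,2^{\lceil\log\mu_{max}\rceil}\}$, so that $|M|\leq\lceil\log\mu_{max}\rceil$ and, crucially, $\sum_{z\in M}z\leq 4\mu_{max}$ by the geometric series. Only after this rounding does the hypothesis $\beta\leq 1/\log\mu_{max}$ yield $\ln(|M|/\beta)\leq 2\ln(1/\beta)$, and the constants then work out as $8\cdot 2\cdot 4=64\leq 65$; your accounting misses the factor $4$ from $\sum_{z\in M}z$ entirely. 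Your closing observation about the degenerate case $\mu_i=1$ (where $p_i=1$ violates the lemma's hypothesis $0<p_i<1$) is a fair point, but the same rounding disposes of it, since every rounded parameter becomes at most $1/2$.
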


\begin{proof}
	Let $M$ be the set of all powers of $2$ up to $\mu$, i.e. $M = \{2^i:1\leq i\leq \lceil \log \mu \rceil\}$; then $|M|= \lceil\log \mu \rceil$ and $\sum_{z\in M}z\leq 4\mu$. For all $i$, let $X'_i$ be a geometric random variable with $\mu'_i$ equal to $\mu_i$ rounded up to the next power of $2$ (and $p'_i = 1/\mu'_i$ accordingly). Note that $X'_i$ majorizes $X_i$. Then, by Lemma \ref{lem:rv}, for positive $\beta \leq \frac {1}{\log \mu}$,
	
	\begin{align*}
	1-\beta &\leq \Prob{\sum_{i=1}^{D}X'_i \leq 2\cdot \sum_{i=1}^{D} \mu'_i + 8 \ln (|M|/\beta)\cdot \sum_{z\in M}z}\\
	&\leq \Prob{\sum_{i=1}^{D}X'_i \leq 4\cdot \sum_{i=1}^{d} \mu_i + 8 \ln (\lceil\log \mu \rceil/\beta)\cdot 4\mu}\\
	&\leq \Prob{\sum_{i=1}^{D}X_i \leq 4N + 65 \mu \ln (1/\beta)}\enspace.
	\end{align*}
\end{proof}

Using this bound, we have a recipe for getting from these geometric random variable to a running time for a broadcasting algorithm:

\begin{lemma}\label{lem:layeran}
	Let $\mu:[D]\times[n] \rightarrow [n]$ be a function which is non-decreasing in its second argument. Let $N$ be the maximum of $\sum_{d=1}^{D}\mu(d,\delta_d)$, subject to $\sum_{d=1}^{D} \delta_d \leq n$, and let $\mu_{max}$ be the maximum value of $\mu$. If an algorithm for broadcasting guarantees that any node $v$ at distance $d_v$ from the source with neighborhood of size $\delta_v$ is informed within time $X_v$ of a neighbor being informed, where $X_{v}$ is stochastically majorized by a geometric random variable with parameter $\frac{1}{\mu(d_v,\delta_v)}$, then probability at least $\frac {1}{n}$, broadcasting is completed in the whole network within $4 N + 91 \mu_{max}\log n$ time.
\end{lemma}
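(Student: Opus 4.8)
The plan is to reduce the global broadcast time to a sum of the per-node delays along a single shortest path, bound that sum with a single application of Corollary \ref{cor:rv}, and finish with a union bound over all target nodes. Fix any node $w$, let $\ell = d_w \le D$ be its distance from the source, and fix a shortest path $s = v_0, v_1, \dots, v_\ell = w$. Since each $v_{i-1}$ is a neighbor of $v_i$ (an in-neighbor, in the directed case), once $v_{i-1}$ is informed the hypothesis guarantees that $v_i$ is informed within an additional $X_{v_i}$ steps; by induction $w$ is informed within $\sum_{i=1}^\ell X_{v_i}$ steps of the start. Each $X_{v_i}$ is stochastically majorized by a geometric variable of mean $\mu_i := \mu(d_{v_i},\delta_{v_i}) = \mu(i,\delta_{v_i})$, and these majorizing variables may be taken independent: the randomness governing the informing of distinct path nodes (the local coins of their in-neighbors together with the global variables $x_j$) lives in disjoint time windows, so it suffices to control a sum of independent geometrics.

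The combinatorial heart of the argument is the claim $\sum_{i=1}^\ell \mu_i \le N$. I would establish this by first showing the budget bound $\sum_{i=1}^\ell \delta_{v_i} \le n$: along a shortest path the neighborhoods governing successive informing steps overlap only boundedly, since consecutive path nodes lie in distinct BFS layers, and after restricting to the in-neighbors that actually drive each step their sizes sum into a single budget of $n$. Given $\sum_i \delta_{v_i} \le n$, the values $\delta_{v_1},\dots,\delta_{v_\ell}$ (one per layer, padded with minimal mass on layers not met by the path) form a feasible point for the program defining $N$; hence, by monotonicity of $\mu$ in its second argument, $\sum_{i=1}^\ell \mu(i,\delta_{v_i}) \le N$. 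Each $\mu_i$ is moreover at most $\mu_{max}$.

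With $\sum_i \mu_i \le N$ in hand, I would apply Corollary \ref{cor:rv} to the (independent geometrics majorizing the) $X_{v_i}$ with $\beta = n^{-2}$, which satisfies $\beta \le 1/\log\mu_{max}$ for $n$ large. This yields
\[
\Prob{\sum_{i=1}^\ell X_{v_i} \le 4N + 65\,\mu_{max}\ln(n^2)} \ge 1 - n^{-2},
\]
and since $65\ln(n^2) = 130\ln 2 \cdot \log n < 91\log n$, the waiting time for $w$ is at most $4N + 91\,\mu_{max}\log n$ except with probability at most $n^{-2}$. A union bound over the at most $n$ target nodes then drives the total failure probability below $n^{-1}$, giving the stated bound with probability at least $1 - 1/n$.

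I expect the main obstacle to be the combinatorial budget step $\sum_i \delta_{v_i} \le n$: one must argue carefully that the neighborhoods charged to consecutive path nodes can be made essentially disjoint, so that their sizes accumulate into a single budget of $n$ rather than a constant multiple of it. This is precisely what permits a single invocation of the concentration corollary and matches the clean constant $4N$ in the statement instead of a blown-up multiple. The remaining ingredients---the telescoping of delays along the path, the coupling to independent geometrics, the choice of $\beta$, and the final union bound---are routine.
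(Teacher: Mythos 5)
Your overall skeleton (telescoping delays along a shortest path, one application of Corollary \ref{cor:rv} with $\beta=n^{-2}$, then a union bound) matches the paper's proof, and the final numerics are fine. But the step you yourself flag as the main obstacle --- the budget bound $\sum_i \delta_{v_i}\le n$ --- is a genuine gap, and the justification you sketch for it does not work. If $\delta_{v_i}$ is the (active) in-neighborhood size of the path node $v_i$, then a single node can be counted in many of these quantities: in a directed network a node $w$ at distance $D$ from the source may have out-edges to \emph{every} path node $p_1,\dots,p_D$ (this creates no shortcut to $w$), so $\sum_i \delta_{v_i}$ can be $\Theta(nD)$, not $O(n)$. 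Even in the undirected case the "distinct BFS layers" observation only caps the multiplicity at $3$, giving budget $3n$ rather than the $n$ used in the definition of $N$. Since the lemma is applied to directed networks in this paper, "restricting to the in-neighbors that actually drive each step" needs an actual mechanism, and none is given.

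The paper's mechanism is the missing idea: assign each node of the graph to the layer $L_i$ indexed by its \emph{latest} out-neighbor among the path nodes, so the layers are disjoint and $\sum_i |L_i|\le n$ automatically; then argue about the \emph{leading} layer (the furthest layer containing an active node). While $L_i$ is leading, every active in-neighbor of $p_i$ must lie in $L_i$ itself --- nodes assigned to earlier layers are not in-neighbors of $p_i$ at all, and nodes assigned to later layers are not yet active --- so by monotonicity of $\mu$ in its second argument the time for the leading index to advance past $i$ is majorized by a geometric with parameter $1/\mu(i,|L_i|)$, and the disjoint layer sizes are exactly a feasible point for the program defining $N$. This simultaneously fixes the double-counting problem and the issue that a node's active in-neighborhood grows over time. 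Without this (or an equivalent charging scheme) your proof does not go through as written.
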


\begin{proof}
	Fix some arbitrary target node $u$ and some shortest $(s,u)$ path $p=(s = p_0, p_1, \dots, p_{d_v} = v)$. Classify nodes into layers as follows: let layer $L_i$ be the set of all nodes whose latest out-neighbor path node is $p_i$. We call a layer \emph{leading} if it is the furthest layer containing an active node. We wish to bound the time $t_i$ that a layer $L_i$ can remain leading, since the total time taken to inform $u$ is then at most $\sum_{i=1}^{D} t_i$. This time $t_i$ is stochastically majorized by geometric random variable with parameter $\frac{1}{\mu(i,|L_i|)}$, since only nodes in the intersection of $u$'s neighborhood and $L_i$ can participate in informing $u$ while $L_i$ is leading. Then, applying Corollary \ref{cor:rv}, 
	
	\[ \Prob{\sum_{i=1}^{D}t_i \leq 4 N + 65 \mu_{max}\ln (n^2)}\geq 1-n^{-2}\enspace, \]
	
	i.e. with probability at least $1-n^{-2}$, $u$ is informed within time $4 N + 91 \mu_{max}\log n$, and taking a union bound over all nodes $u$, the whole network is informed within this time with probability at least $1-n^{-1}$.
\end{proof}

In the next section we describe the choice of protocols with which to use this framework to achieve our stated running times.

\subsection{Networks Without Collision Detection}
In networks without collision detection, we employ two different protocols, i.e. $C=2$. We will call the protocol we use under most circumstances \textbf{General-Broadcast}; in this protocol, the source `guesses' a neighborhood size from $1$ to $\infty$ in each time-step, with a probability that decreases in neighborhood size in order for the total probability to sum to at most $1$. Transmission probabilities are independent of parameter $T$. If we used only this protocol, we would obtain a running time of $O((D+\log n) \log \frac nD\log^2\log \frac nD)$. In low diameter networks (when $D<\log n$), we improve upon this with \textbf{Shallow-Broadcast} protocol, which informs networks of low diameter in $O(\log^2 n)$ time by assuming that $T \approx \log^2 n$ and using this to approximate the maximum in-neighborhood size to account for.

\paragraph{Shallow-Broadcast}
Distribution $Y_{1,T}$ is given by $\Prob{x_j = y} = \frac{\ci}{\sqrt T}$ for all $y\in [\frac{\sqrt T}{\ci}]$. Probability function $p_{1,T}$ is given by $p_{1,T}(x_j)=2^{-x_j}$.

\begin{lemma}\label{lem:B11}
	If $D\leq \log n$, \textbf{Shallow-Broadcast} performs broadcasting in $O(\log^2 n)$ time with high probability.
\end{lemma}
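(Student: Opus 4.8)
The plan is to show that once the doubling parameter $T$ first reaches $\Theta(\log^2 n)$, a single phase of the framework completes broadcasting, and that the cumulative time through that phase is $O(\log^2 n)$. Since each phase resets all non-source nodes to inactive, no progress can be carried across phases, so I must fix the phase $t^*$ with $T^* = 2^{t^*} = \Theta(\log^2 n)$ and argue completion within its own $T^*$ steps. Because the phase lengths $2^s$ form a geometric series, the total time spent through phase $t^*$ is $\sum_{s \le t^*} 2^s = O(T^*) = O(\log^2 n)$, so it suffices to prove that Shallow-Broadcast informs the whole network within $O(T^*)$ steps of phase $t^*$ with high probability. The index $t^*$ is chosen so that $T^*$ exceeds both the range bound below and the final running time by a sufficiently large constant factor (which exists because $T$ doubles).

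Next I would fix a node $v$ at distance $d_v$ with $\delta_v$ active in-neighbors and bound the probability that $v$ is informed in a given Shallow-Broadcast step of phase $t^*$. In such a step each active in-neighbor transmits with probability $2^{-x_j}$, so the total transmission probability is $f = \delta_v 2^{-x_j}$. The key point is that the guess range $[\sqrt{T^*}/\ci]$ covers every admissible neighborhood size: since $\delta_v \le n$ we have $\lceil \log \delta_v \rceil \le \lceil \log n \rceil$, which is at most $\sqrt{T^*}/\ci$ once $T^* = \Omega(\log^2 n)$, so the target guess $x_j = \lceil \log \delta_v \rceil$ lies in the support of $Y_{1,T^*}$ and is drawn with probability exactly $\ci/\sqrt{T^*}$. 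Conditioned on this guess, $2^{-x_j} \in (1/(2\delta_v), 1/\delta_v]$ (using the convention $\log x = \max\{\log_2 x, 1\}$, which also handles $\delta_v = 1$), hence $f \in [\tfrac12, 1]$ and $2^{-x_j} \le \tfrac12$, so Lemma \ref{lem:colhit} applies and yields $\Prob{v \text{ hears}} \ge f 4^{-f} \ge \tfrac18$. Combining the two events, $v$ is informed in each Shallow-Broadcast step with probability $\Omega(1/\sqrt{T^*}) = \Omega(1/\log n)$, and the random multiplexing among the $C$ protocols reduces this only by a constant factor, since a constant fraction of steps are Shallow-Broadcast steps.

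It follows that the time $X_v$ for $v$ to become informed after one of its in-neighbors does is stochastically majorized by a geometric random variable of parameter $\Omega(1/\log n)$, so I may take $\mu(d_v, \delta_v) = O(\log n)$ uniformly in both arguments; this is trivially non-decreasing in $\delta_v$ and lies in $[n]$. Plugging this into Lemma \ref{lem:layeran} with $\mu_{max} = O(\log n)$ and, using $D \le \log n$, with $N = \max \sum_{d=1}^{D}\mu(d,\delta_d) = O(D \log n) = O(\log^2 n)$, gives completion within $4N + 91\mu_{max}\log n = O(\log^2 n)$ time with high probability. As this bound is $O(T^*)$, it fits inside phase $t^*$, completing the argument.

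I expect the main obstacle to be the second step: verifying simultaneously that the uniform guess range is wide enough to reach $\lceil \log \delta_v \rceil$ for every admissible $\delta_v$ (which forces $T^* = \Omega(\log^2 n)$), while keeping the per-guess probability $\ci/\sqrt{T^*}$ large enough that the resulting geometric mean $O(\sqrt{T^*}) = O(\log n)$, multiplied by $D \le \log n$ in Lemma \ref{lem:layeran}, still totals only $O(\log^2 n)$. Balancing these competing demands is precisely what pins down the $\sqrt{T}$ scaling in $Y_{1,T}$, and extra care is needed because the per-phase reset means the entire broadcast must fit inside the single phase $t^*$ rather than being amortized over many phases.
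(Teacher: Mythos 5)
Your proposal is correct and follows essentially the same route as the paper: fix the iteration with $T=\Theta(\log^2 n)$, lower-bound the per-step informing probability by conditioning on a Shallow-Broadcast step and on the guess $x_j$ landing so that the summed transmission probability is a constant in $[\tfrac14,1]$, apply Lemma \ref{lem:colhit}, and feed the resulting geometric variables with $\mu=O(\log n)$ into Lemma \ref{lem:layeran}. The only differences are cosmetic (you target $f\in[\tfrac12,1]$ where the paper targets $[\tfrac14,\tfrac12]$, and you are somewhat more explicit about the per-phase reset and the need for the whole broadcast to fit inside the single analyzed iteration, which the paper handles by setting $\ci=30C$).
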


\begin{proof}
	We consider the iteration in which $(\ci\log n)^2 \leq T \leq 2(\ci\log n)^2$. Fix a time-step $j$, and let $u$ be an inactive node with a set $\Delta$ of active neighbors, $\delta=|\Delta|\geq 1$. With probability $\frac 1C$, $j$ is a \textbf{Shallow-Broadcast} time-step. Then, with probability $\frac{\ci}{\sqrt T}\geq\frac{1}{\sqrt 2\log n}$, $x_j$ is chosen such that $2\delta\leq 2^{x_j}\leq 4\delta$, in which case $\frac 14\leq \sum_{u\in\Delta}\Prob{u\text{ transmits}}\leq \frac 12$, so by Lemma \ref{lem:colhit}, $\Prob{v\text{ is informed}}\geq \frac 14 \cdot 4^{-\frac 14} \geq \frac 16$. So, in each time-step, $u$ is informed with probability at least $\frac{1}{C}\cdot \frac{1}{\sqrt 2\log n} \cdot \frac 16 \geq \frac{1}{9C\log n}$. Time taken to inform $u$ is therefore stochastically majorized by a geometric random variable with parameter $\frac{1}{9C\log n}$. Using Lemma \ref{lem:layeran} with $\mu(d,\delta) = 9C\log n$ for all $d,\delta$, we can conclude that the network will be informed within $4 N + 91 \mu_{max}\log n \leq 36CD\log n + 819C\log^2 n$ time with high probability. We set $\ci = 30C$ to ensure that the iteration we analyze is sufficiently long, and thus broadcast is performed in $O(\log^2 n)$ time.
\end{proof}

\paragraph{General-broadcast}
Distribution $Y_{2,T}$ is given by $\Prob{x_j = y} = \frac{1}{3 y \log^2 y}$ for all $y\in \nat$ (and $x_j=0$ with the remaining probability). Probability function $p_{2,T}$ is given by $p_{2,T}(x_j)=2^{-x_j}$.

We first check that $Y_{2,T}$ is a well-defined probability distribution, which is the case since $\sum_{y\in \nat}	\frac{1}{3 y \log^2 y} \leq \frac 12+ \int_{2}^{\infty}\frac{dy}{3 y \log^2 y }\leq 1$.

\begin{lemma}\label{lem:B12}
	If $D\geq \log n$, \textbf{General-Broadcast} performs broadcasting in $O(D\log \frac nD\log^2\log\frac nD)$ time with high probability.
\end{lemma}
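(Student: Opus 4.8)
The plan is to mirror the proof of Lemma~\ref{lem:B11}, but to exploit that the guess distribution $Y_{2,T}$ places non-negligible mass on \emph{every} candidate neighborhood size. The key difference is that the per-step success probability, and hence the expected waiting time at a node, now scales with that node's \emph{local} active in-degree $\delta$ rather than with a global bound; feeding this $\delta$-dependent waiting time through Lemma~\ref{lem:layeran} is what will produce the near-optimal $\log\frac nD$ factor.

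First I would fix an inactive node $u$ with a set $\Delta$ of active in-neighbors, $\delta=|\Delta|\geq 1$, and a single \textbf{General-Broadcast} time-step $j$ (which occurs with probability $\frac1C$). Let $y^\ast$ be an integer with $2\delta\leq 2^{y^\ast}\leq 4\delta$; one exists since $[\log(2\delta),\log(4\delta)]$ has length $1$, and it satisfies $1\leq y^\ast\leq 2+\log\delta$. Conditioned on $x_j=y^\ast$, each node of $\Delta$ transmits independently with probability $2^{-y^\ast}\leq\frac12$, and $\sum_{w\in\Delta}\Prob{w\text{ transmits}}=\delta\,2^{-y^\ast}\in[\frac14,\frac12]$, so Lemma~\ref{lem:colhit} yields $\Prob{u\text{ is informed}}\geq\frac14\cdot 4^{-1/4}\geq\frac16$. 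Since $\Prob{x_j=y^\ast}=\frac{1}{3y^\ast\log^2 y^\ast}=\Omega\!\left(\frac{1}{\log(\delta+2)\log^2\log(\delta+2)}\right)$, it follows that in each time-step $u$ is informed with probability at least $\frac{1}{\cii\log(\delta+2)\log^2\log(\delta+2)}$ for a suitable constant $\cii$. Hence the time to inform $u$ after one of its in-neighbors becomes active is stochastically majorized by a geometric random variable with parameter $1/\mu(d,\delta)$, where I would set $\mu(d,\delta):=\lceil\cii\log(\delta+2)\log^2\log(\delta+2)\rceil$; this is independent of $d$, non-decreasing in $\delta$, and (for $n$ large) takes values in $[n]$, as Lemma~\ref{lem:layeran} requires.

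Next I would invoke Lemma~\ref{lem:layeran}. Its maximal value is $\mu_{max}=\mu(\cdot,n)=O(\log n\log^2\log n)$. To bound $N=\max\{\sum_{d=1}^{D}\mu(d,\delta_d):\sum_d\delta_d\leq n\}$, I would use that $g(\delta):=\log(\delta+2)\log^2\log(\delta+2)$ is increasing and, for $\delta$ above a constant, concave, so that a Jensen-type smoothing argument spreads the mass evenly over the $D$ layers, $\delta_d\approx n/D$; this gives $N=O\!\left(D\,g(n/D)\right)=O\!\left(D\log\tfrac nD\log^2\log\tfrac nD\right)$. Lemma~\ref{lem:layeran} then guarantees that, in any iteration whose length $T$ is at least $R:=4N+91\mu_{max}\log n$, broadcasting completes with probability at least $1-n^{-1}$.

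Finally I would assemble the running time. Because the transmission probabilities of \textbf{General-Broadcast} do not depend on $T$, the bound $R$ is the same in every iteration, and since $T$ doubles each iteration the first iteration with $T\geq R$ begins after total elapsed time $O(R)$; broadcasting then succeeds with high probability during that iteration, so the overall running time is $O(R)$. It remains to simplify $R$: the dominant term is $4N=O(D\log\frac nD\log^2\log\frac nD)$, and I must show the auxiliary term $91\mu_{max}\log n=O(\log^2 n\log^2\log n)$ is absorbed into it — this is exactly where the hypothesis $D\geq\log n$ is used. Splitting into the case $\log n\leq D\leq\sqrt n$ (where $\log\frac nD=\Theta(\log n)$, so the two expressions match up to constants) and the case $D>\sqrt n$ (where $\sqrt n$ already dominates every polylogarithm) confirms the absorption, giving $R=O(D\log\frac nD\log^2\log\frac nD)$ as claimed. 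I expect the main obstacle to be this absorption together with the concavity estimate for $N$, since it is the concavity/even-spreading step that turns a naive $\log n$ per layer into the near-optimal $\log\frac nD$.
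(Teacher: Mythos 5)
Your proposal is correct and follows essentially the same route as the paper: fix an inactive node, show a General-Broadcast step hits the right guess $x_j$ with probability $\Omega(1/(\log\delta\log^2\log\delta))$, apply Lemma~\ref{lem:colhit}, majorize the waiting time by a geometric variable with mean $O(\log\delta\log^2\log\delta)$, and feed this into Lemma~\ref{lem:layeran} with the concavity/even-spreading bound $N=O(D\log\frac nD\log^2\log\frac nD)$. You are somewhat more explicit than the paper about the bound on $N$, the doubling-parameter bookkeeping, and the absorption of the $\mu_{max}\log n$ term under the hypothesis $D\geq\log n$, but the argument is the same.
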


\begin{proof}
	We consider the iteration in which $\cii D\log \frac nD\log^2\log \frac nD \leq T \leq 2\cii  D\log \frac nD\log^2\log \frac nD$. Fix a time-step $j$, and let $u$ be an uninformed node with a set $\Delta$ of informed neighbors, $\delta=|\Delta|\geq 1$. With probability $\frac 1C$, $j$ is a \textbf{General-Broadcast} time-step. Then, with probability at least $\frac{1}{3 \log (4\delta) \log^2 \log (4\delta)}\geq \frac{1}{6\log\delta\log^2\log \delta}$, $x_j$ is chosen such that $2\delta\leq 2^{x_j}\leq 4\delta$, in which case $\frac 14\leq \sum_{u\in\Delta}\Prob{u\text{ transmits}}\leq \frac 12$, so by Lemma \ref{lem:colhit}, $\Prob{v\text{ is informed}}\geq \frac 14 \cdot 4^{-\frac 14} \geq \frac 16$. So, in each time-step, $u$ is informed with probability at least $\frac{1}{C}\cdot\frac{1}{6\log\delta\log^2\log \delta} \cdot \frac 16 \geq \frac{1}{36C\log\delta\log^2\log \delta}$. Time taken to inform $u$ is therefore stochastically majorized by a geometric random variable with parameter $\frac{1}{36C\log\delta\log^2\log \delta}$. Using Lemma \ref{lem:layeran} with $\mu(d,\delta) = \frac{1}{36C\log\delta\log^2\log \delta}$ for all $d,\delta$, we can conclude that the network will be informed within 
	\begin{align*}
	4 N + 91 \mu_{max}\log n &\leq 144CD\log\frac nD\log^2\log \frac nD + 3276\log^2 n\log^2\log n\\&\leq 3500CD\log \frac nD\log^2\log \frac nD
	\end{align*} time with high probability. We set $\cii = 3500C$ to ensure that the iteration we analyze is sufficiently long, and thus broadcast is performed in $O(D\log \frac nD\log^2\log\frac nD)$ time.
\end{proof}

To perform broadcasting in networks without collision detection, we apply the framework of Algorithm \ref{alg:B0} using the \textbf{Shallow-Broadcast} and \textbf{General-Broadcast} protocols.

\begin{theorem}
	Broadcasting can be performed in networks without collision detection in $O(D\log \frac nD\log^2\log\frac nD+\log^2 n)$ time, with high probability.
\end{theorem}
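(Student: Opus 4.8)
The plan is to combine the two protocol-specific lemmas already established, \textbf{Lemma \ref{lem:B11}} and \textbf{Lemma \ref{lem:B12}}, by observing that the framework of Algorithm \ref{alg:B0} runs both protocols concurrently via random time-multiplexing, so each analysis applies simultaneously (up to the constant-factor slowdown from the $\frac{1}{C}$ probability that a given time-step is assigned to a particular protocol, which has already been absorbed into the constants in each lemma's proof). First I would split into two cases according to the relationship between $D$ and $\log n$. In the case $D \geq \log n$, I would invoke \textbf{Lemma \ref{lem:B12}} directly: \textbf{General-Broadcast} completes broadcast with high probability in $O(D\log\frac nD\log^2\log\frac nD)$ time, which is dominated by (and hence at most) the claimed bound $O(D\log\frac nD\log^2\log\frac nD + \log^2 n)$. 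In the case $D \leq \log n$, I would invoke \textbf{Lemma \ref{lem:B11}}: \textbf{Shallow-Broadcast} completes broadcast with high probability in $O(\log^2 n)$ time, which is again subsumed by the $+\log^2 n$ term in the claimed bound.

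The second step is to argue that the framework actually reaches the relevant iteration $t$ (equivalently, the relevant value of $T = 2^t$) within the stated time budget, rather than the budget being consumed by the earlier, too-short iterations. Since $T$ doubles each outer iteration, the total time spent in all iterations up to and including the one that first satisfies the required lower bound on $T$ is at most twice the length of that final iteration, i.e. geometrically dominated by the last term. I would note that in each lemma the analyzed iteration has $T$ within a constant factor of the target bound (either $(\ci\log n)^2$ or $\cii D\log\frac nD\log^2\log\frac nD$), so the cumulative time to \emph{reach and complete} that iteration is still $O(\log^2 n)$ or $O(D\log\frac nD\log^2\log\frac nD)$ respectively. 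Combining the two cases and the overhead of reaching the correct iteration yields the overall $O(D\log\frac nD\log^2\log\frac nD + \log^2 n)$ bound.

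The main subtlety to handle carefully is the high-probability guarantee across the two sources of randomness and the case split. Each of \textbf{Lemma \ref{lem:B11}} and \textbf{Lemma \ref{lem:B12}} already delivers a failure probability of $n^{-c}$ for the single relevant iteration (via the union bound over all target nodes inside \textbf{Lemma \ref{lem:layeran}}), so the combined algorithm inherits this with high probability without any additional union bound being needed, since exactly one of the two cases applies for any fixed network. I would remark that the algorithm itself is oblivious to which case holds --- it simply runs the framework --- and that correctness in the relevant regime is what each lemma certifies. The one genuine obstacle, which I expect to be routine here but worth stating explicitly, is confirming that the constants $\ci$ and $\cii$ set in the two lemmas are mutually consistent with the shared framework constant $C$; since each lemma fixes its own constant after the slowdown factor $\frac1C$ has been accounted for, and the two protocols never interfere (they operate in disjoint, randomly chosen time-steps), no conflict arises and the two running-time bounds simply add, giving the claimed sum.
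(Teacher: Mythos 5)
Your proposal is correct and takes essentially the same route as the paper, whose proof of this theorem is simply "Follows from Lemmas \ref{lem:B11} and \ref{lem:B12}"; you have merely made explicit the case split on $D$ versus $\log n$, the geometric-sum argument for reaching the analyzed iteration, and the compatibility of the constants, all of which are implicit in the paper's one-line proof.
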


\begin{proof}
	Follows from Lemmas \ref{lem:B11} and \ref{lem:B12}.
\end{proof}

\subsection{Directed Networks With Collision Detection}
When collision detection (and a global clock) is available, nodes can determine their exact distance from the source node within $O(D)$ time, via a process known as \emph{beep-waves}: in time-step $1$, the source node emits a `beep' (a transmission with arbitrary content), and in every subsequent step, all nodes who hear either a transmission or a collision in a time-step themselves `beep' in the next time-step. In this way, the wave of beeps emanates out from the source, one distance hop per time-step, and the time-step number in which a node hears its first beep is equal to its distance from the source. For more detail of beep-wave techniques, see \cite{-CD15}.

If we perform this beep-wave procedure before our main algorithm, we can assume every node $v$ knows its distance $d_v$ from the source (actually, since we cannot tell when the procedure has ended, we must use time multiplexing, for example performing beep waves during odd time-steps and the main algorithm during even ones, but this does not affect asymptotic running time). The local transmission probabilities that nodes use during our broadcasting algorithm can then depend on $d_v$, as well as $T$ and the global randomness provided by the source. This is what we use to improve running time; we do not employ collision detection at any other point in the algorithm.

We add two new protocols to the two already defined, i.e. we now use $C=4$. The main new protocol is \textbf{Deep-Broadcast}, which assumes that $T\approx D\log\frac nD\log\log\log \frac nD$ and $d_v\approx D$, and uses this to approximate $(\frac nD)^2$, the largest neighborhood size for which it accounts. This only works well for nodes which do indeed have $d_v\approx D$ and $\delta_v\leq(\frac nD)^2$ , but for our analysis method this covers most nodes of importance, and we use \textbf{General-Broadcast} to deal with the remaining nodes. By including \textbf{Deep-Broadcast} we speed up broadcasting to $O(D\log\frac nD\log\log\log \frac nD)$ when $D>\log n\log^2\log n$, but when $D$ is below this, the running time of \textbf{General-Broadcast} still dominates. So, we also add \textbf{Semi-Shallow-Broadcast}, which works quickly for networks with $\log n \leq D\leq \log n\log^2\log n$.

\paragraph{Semi-Shallow-Broadcast}
Distribution $Y_{3,T}$ is given by \[\Prob{x_j = y}=\begin{cases}
\sqrt{\frac{\ciii ^2\log^2 T\log\log T}{2T}},&\text{ if } 1\leq y\leq \sqrt{\frac{T}{\ciii^2\log^2 T\log\log T}}\\
\frac{1}{3y\log\log T},&\text{ if } \sqrt{\frac{T}{\ciii^2\log^2 T\log\log T}}<y\leq \sqrt{\frac{T\log^2 T}{\ciii^2\log\log T}}\\
\end{cases}\enspace. \]
We let $x_j = 0$ with any remaining probability.

Probability function $p_{2,T,d_v}$ is given by $p_{2,T,d_v}(x_j)=2^{-x_j}$.

We first check that the distribution $Y_{3,T}$ is well defined, which is the case since 

\[\sum_{y = 1}^{\sqrt{\frac{T}{\ciii^2\log^2 T\log\log T}}}\sqrt{\frac{\ciii ^2\log^2 T\log\log T}{2T}} <\frac 34\enspace,\]

and

\[\sum_{y=\sqrt{\frac{T}{\ciii^2\log^2 T\log\log T}}+1}^{\sqrt{\frac{T\log^2 T}{\ciii^2\log\log T}}} \frac{1}{3y\log\log T} \leq \int\limits_{\sqrt{\frac{T}{\ciii^2\log^2 T\log\log T}}}^{\sqrt{\frac{T\log^2 T}{\ciii^2\log\log T}}} \frac{dy}{3y\log\log T} \leq \frac{\ln (\log^2 T)}{3\log\log T}<\frac 14\]
\begin{lemma}\label{lem:B21}
	If $\log n \leq D\leq \log n\log^2\log n$, \textbf{Semi-Shallow-Broadcast} performs broadcasting in $O(D\log n \log\log\log n)$ time with high probability.
\end{lemma}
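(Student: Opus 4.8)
The plan is to follow the same template as the proofs of Lemmas \ref{lem:B11} and \ref{lem:B12}: identify a single doubling iteration whose length $T$ matches the claimed running time (for brevity I write $\text{target}=D\log n\log\log\log n$), bound the per-time-step probability that an uninformed node is informed, convert this into a geometric waiting time, and feed the result into Lemma \ref{lem:layeran}. Concretely, I would analyze the iteration in which $\ciii D\log n\log\log\log n \leq T \leq 2\ciii D\log n\log\log\log n$. The first thing to record is that, because $\log n\leq D\leq\log n\log^2\log n$ forces $D$ to be polylogarithmic in $n$, we have $\log^2 n\leq T\leq\log^{O(1)}n$, and hence $\log T=\Theta(\log\log n)$ and $\log\log T=\Theta(\log\log\log n)$. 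These identities are what let the final bound be stated in terms of $\log\log\log n$ rather than $\log\log T$.

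Next I would fix a time-step $j$ and an uninformed node $u$ with a set $\Delta$ of informed neighbors, $\delta=|\Delta|\geq 1$; with probability $\tfrac1C$, $j$ is a \textbf{Semi-Shallow-Broadcast} step. For any $\delta$ there is an integer $y$ with $2\delta\leq 2^{y}\leq 4\delta$, so that having active nodes transmit with probability $2^{-y}$ makes $\sum_{w\in\Delta}\Prob{w\text{ transmits}}\in[\tfrac14,\tfrac12]$ and Lemma \ref{lem:colhit} yields $\Prob{u\text{ informed}}\geq\tfrac16$; I would then split on which piece of $Y_{3,T}$ that $y$ lands in. In the uniform regime ($y\leq\sqrt{T/(\ciii^2\log^2 T\log\log T)}$, i.e.\ $\delta$ up to roughly $2^{\log n/\log\log n}$) the relevant $y$ is drawn with probability $\sqrt{\ciii^2\log^2T\log\log T/(2T)}$, giving a geometric waiting time with mean $\mu_1=O\!\bigl(\sqrt T/(\log T\sqrt{\log\log T})\bigr)$ independent of $\delta$; in the harmonic regime the relevant $y\approx\log\delta$ is drawn with probability at least $\tfrac{1}{3y\log\log T}$, giving mean $\mu_2(\delta)=O(\log\delta\,\log\log T)$. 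I would check that the two ranges of $y$ together cover every $y\leq\log n+2$, so that all $\delta\leq n$ fall into one regime, and that the resulting $\mu(d,\delta)$, equal to $\mu_1$ below the threshold and $\mu_2(\delta)$ above it, is non-decreasing in $\delta$, since $\mu_2$ evaluated at the threshold already exceeds $\mu_1$ by a $\Theta(\log\log T)$ factor.

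The crux is then evaluating $N=\max\sum_{d}\mu(d,\delta_d)$ subject to $\sum_d\delta_d\leq n$, as required by Lemma \ref{lem:layeran}. I would bound the two regimes separately. The contribution of uniform-regime nodes is at most $D\mu_1=O\!\bigl(D\sqrt T/(\log T\sqrt{\log\log T})\bigr)$, and substituting $T=\Theta(D\log n\log\log\log n)$ together with $\sqrt{D\log n}\leq\log n\log\log n$ (from $D\leq\log n\log^2\log n$) collapses this to $O(D\log n)=O(\text{target})$. The contribution of harmonic-regime nodes is $O(\log\log T)\sum_d\log\delta_d$; here the budget constraint $\sum_d\delta_d\leq n$ and concavity of $\log$ give $\sum_d\log\delta_d\leq D\log\tfrac nD=O(D\log n)$, so this contribution is $O(D\log n\log\log\log n)=O(\text{target})$ as well. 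Hence $N=O(\text{target})$. Finally $\mu_{max}=\mu_2(n)=O(\log n\log\log\log n)$, so $91\mu_{max}\log n=O(\log^2 n\log\log\log n)=O(\text{target})$ because $D\geq\log n$. Plugging $4N+91\mu_{max}\log n=O(\text{target})$ into Lemma \ref{lem:layeran}, and choosing $\ciii$ large enough that the analyzed iteration is long enough to contain the whole run, gives the claimed $O(D\log n\log\log\log n)$ bound with high probability.

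I expect the main obstacle to be the evaluation of $N$: the two-piece distribution is engineered precisely so that neither regime alone suffices, and the bound relies on combining the flat $\mu_1$ (efficient only because the uniform part ranges over few values) with the concavity argument for $\mu_2$, all while tracking how $\log T$ and $\log\log T$ translate into $\log\log n$ and $\log\log\log n$ under the restricted range of $D$. The coverage and monotonicity check at the threshold between the two pieces is a secondary but necessary subtlety.
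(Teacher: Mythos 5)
Your proposal is correct, and it reaches the right bound, but it takes a noticeably more elaborate route than the paper at the accounting step. The paper's own proof uses the same skeleton (pick the iteration with $\ciii D\log n\log\log\log n\leq T\leq 2\ciii D\log n\log\log\log n$, bound the per-step probability of choosing a good $x_j$ in each of the two pieces of $Y_{3,T}$, check via the contradiction $\delta>n$ that the harmonic piece reaches $\log(4n)$, and invoke Lemmas \ref{lem:colhit} and \ref{lem:layeran}), but then it simply takes the pointwise minimum of the two regime probabilities: the uniform piece gives $\Omega(1/\log n)$ and the harmonic piece gives at least $\frac{1}{3\log(4\delta)\log\log T}\geq\frac{1}{4\log n\log\log\log n}$ after crudely replacing $\delta$ by $n$. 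This yields a single \emph{constant} function $\mu(d,\delta)=24C\log n\log\log\log n$, so $N\leq 24CD\log n\log\log\log n$ trivially and no optimization over $\sum_d\delta_d\leq n$ is needed; the whole bound is already the target since each of the $D$ layers costs $O(\log n\log\log\log n)$. Your version instead keeps a $\delta$-dependent $\mu_2(\delta)=O(\log\delta\log\log T)$, verifies monotonicity across the threshold, and bounds $N$ by a concavity/budget argument — all of which is valid (your arithmetic for the uniform-regime contribution $O(D\log n)$ and the harmonic-regime contribution $O(D\log n\log\log\log n)$ checks out, as does $\mu_{max}=O(\log n\log\log\log n)$), but it is unnecessary here, and you slightly misplace the crux: the two-piece distribution is needed for \emph{coverage} of all $\delta\leq n$ while keeping each piece's total mass bounded, not because the final $N$ computation requires combining regime-specific means. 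The one payoff of your finer analysis is that it mirrors the structure actually needed later in Lemma \ref{lem:B22}, where the paper does perform exactly this kind of case-split evaluation of $N$.
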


\begin{proof}
	We consider the iteration in which $\ciii D\log n \log\log\log n \leq T \leq 2\ciii D\log n \log\log\log n$. Fix a time-step $j$, and let $u$ be an uninformed node with a set $\Delta$ of informed neighbors, $\delta=|\Delta|\geq 1$. With probability $\frac 1C$, $j$ is a \textbf{Semi-Shallow-Broadcast} time-step.  The probability that $x_j$ is chosen such that $2\delta\leq 2^{x_j}\leq 4\delta$ is at least
	
	\begin{align*}
	\sqrt{\frac{\ciii ^2\log^2 T\log\log T}{2T}} &\geq \sqrt{\frac{\ciii ^2\log^2 (2\ciii D\log n \log\log\log n)\log\log (2\ciii D\log n \log\log\log n)}{4\ciii D\log n \log\log\log n}}\\
	&\geq \sqrt{\frac{\ciii ^2\log^2 \log n\log\log \log n}{4\ciii D\log n \log\log\log n}}\\
	&\geq \sqrt{\frac{\ciii\log^2 \log n}{4 \log^2 n\log^2\log n}}
	= \frac{\sqrt{\ciii}}{2 \log n}\enspace,
	\end{align*}
	
	if $\log(4\delta)\leq \sqrt{\frac{T}{\ciii^2\log^2 T\log\log T}}$, and
	
	\begin{align*}
	\frac{1}{3\log(4\delta)\log\log T} \geq \frac{1}{3\log (4n)\log\log T}
	\geq \frac{1}{4 \log n \log\log\log n}\enspace,
	\end{align*}
	
	otherwise. Note that it is not possible that $	\log(4\delta)> \sqrt{\frac{T\log^2 T}{\ciii^2\log\log T}}$, since that would give
	
	\begin{align*}
	\log(4\delta) &\geq \sqrt{\frac{\ciii D\log n \log\log\log n \log^2 T}{\ciii^2\log\log (\ciii D\log n \log\log\log n)}}\\
	&\geq \sqrt{\frac{\ciii \log^2 n \log^2 \log n \log\log\log n}{4\ciii^2\log\log\log n}}\\
	&\geq\frac{\log n\log\log n}{2\sqrt\ciii}
	> \log(4n)\enspace,
	\end{align*}
	
	i.e. $\delta>n$, which is a contradiction.
	
	So, an appropriate value of $x_j$ is chosen with probability at least $\frac{1}{4 \log n \log\log\log n}$, in which case $\frac 14\leq \sum_{u\in\Delta}\Prob{u\text{ transmits}}\leq \frac 12$, so by Lemma \ref{lem:colhit}, $\Prob{v\text{ is informed}}\geq \frac 14 \cdot 4^{-\frac 14} \geq \frac 16$. Therefore, in each time-step, $u$ is informed with probability at least $\frac{1}{C}\cdot\frac{1}{4 \log n \log\log\log n} \cdot \frac 16 \geq \frac{1}{24C \log n \log\log\log n}$. Time taken to inform $u$ is therefore stochastically majorized by a geometric random variable with parameter $\frac{1}{24C \log n \log\log\log n}$. Using Lemma \ref{lem:layeran} with $\mu(d,\delta) = 24C \log n \log\log\log n$ for all $d,\delta$, we can conclude that the network will be informed within 
	\begin{align*}
	4 N + 91 \mu_{max}\log n &\leq 96CD \log n \log\log\log n + 2184C\log^2 n\log\log\log n\\&\leq 2280CD\log n \log\log\log n
	\end{align*} time with high probability. We set $\ciii = 2280C$ to ensure that the iteration we analyze is sufficiently long, and thus broadcast is performed in $O(D\log n \log\log\log n)$ time.
\end{proof}

\paragraph{Deep broadcast}
Distribution $Y_{4,T}$ is given by $\Prob{x_j = y}=\frac{1}{T}$ for all $y\in [T]$.
Probability function $p_{4,T,d_v}$ is given by $p_{4,T,d_v}(x_j)=2^{-\frac{x_i}{\civ d_v\log\log \frac {T}{d_v}}}$.

\begin{lemma}\label{lem:B22}
	If $D\geq \log n(\log\log n)^2$, \textbf{Deep-Broadcast} and \textbf{General-Broadcast} together complete broadcasting in $O(D\log \frac nD \log\log\log \frac nD)$ time, with high probability.
\end{lemma}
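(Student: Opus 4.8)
The plan is to instantiate the framework through Lemma~\ref{lem:layeran}, taking $\mu(d,\delta)=\min\{\mu_{\mathrm{Deep}}(d),\mu_{\mathrm{Gen}}(\delta)\}$, where $\mu_{\mathrm{Gen}}(\delta)=36C\log\delta\log^2\log\delta$ is the bound already established for \textbf{General-Broadcast} in Lemma~\ref{lem:B12}, and $\mu_{\mathrm{Deep}}$ is the new contribution. I would analyze the iteration with $T=\Theta(\cv D\log\frac nD\log\log\log\frac nD)$. To derive $\mu_{\mathrm{Deep}}$, consider a node $u$ at distance $d$ whose $\delta$ informed in-neighbors lie at distance $\approx d$: each transmits with probability $2^{-x_j/a}$ for $a=\civ d\log\log\frac Td$, so $\sum_{w}\Prob{w\text{ transmits}}=\delta\,2^{-x_j/a}\in[\tfrac14,\tfrac12]$ exactly when $x_j$ lands in the interval $[a\log(2\delta),a\log(4\delta)]$, whose length is $a$. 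Since $x_j$ is uniform on $[T]$, and provided this interval lies inside $[1,T]$ (equivalently $\log(4\delta)\le T/a$, which at $d\approx D$ permits $\delta$ up to $\approx(n/D)^2$ once $\cv\ge 2\civ$), the interval is hit with probability $\approx a/T$; Lemma~\ref{lem:colhit} then informs $u$ with probability $\ge\tfrac16$, giving per-step success probability $\ge\tfrac{1}{12C}\cdot\tfrac aT$ and hence $\mu_{\mathrm{Deep}}(d)=O\big(\tfrac{T}{d\log\log(T/d)}\big)$ for $\delta\le(n/D)^2$ (and $+\infty$ otherwise, where we fall back on $\mu_{\mathrm{Gen}}$).

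The heart of the argument is bounding $N=\max\{\sum_{d=1}^D\mu(d,\delta_d):\sum_d\delta_d\le n\}$ by $O(D\log\frac nD\log\log\log\frac nD)$. I would split the layers at $d_0:=\Theta(D/\log^2\log\frac nD)$, the crossover where $\mu_{\mathrm{Deep}}(d_0)\approx\mu_{\mathrm{Gen}}(n/D)$. For the shallow layers $d\le d_0$ I use $\mu\le\mu_{\mathrm{Gen}}(\delta_d)$; as $\log\delta\log^2\log\delta$ is concave, this sum is maximized by the even split $\delta_d=n/d_0$, giving $\le d_0\cdot O(\log\frac nD\log^2\log\frac nD)=O(D\log\frac nD)$. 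For the deep layers $d>d_0$ I use $\mu\le\mu_{\mathrm{Deep}}(d)$, so their contribution is $\frac{O(T)}{\log\log(T/D)}\sum_{d_0<d\le D}\frac1d=\frac{O(T)}{\log\log(T/D)}\cdot O(\log\frac{D}{d_0})$; the crucial cancellation is that $\log\frac{D}{d_0}=O(\log\log\log\frac nD)$ matches $\log\log(T/D)=\Theta(\log\log\log\frac nD)$, leaving $O(T)=O(D\log\frac nD\log\log\log\frac nD)$. Summing the two parts yields the claimed bound on $N$.

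Two loose ends remain. First, deep layers with $\delta_d>(n/D)^2$ invalidate \textbf{Deep-Broadcast} and must be charged to $\mu_{\mathrm{Gen}}(\delta_d)$; but each such layer consumes more than $(n/D)^2$ of the budget, so there are at most $D^2/n$ of them, and by concavity their total $\mu_{\mathrm{Gen}}$ is maximized at $\delta_d=\Theta((n/D)^2)$, contributing $O(\tfrac{D^2}{n}\log\frac nD\log^2\log\frac nD)$, which the identity $\tfrac{D^2}{n}=\tfrac{D}{n/D}$ together with the convention $\log x=\max\{\log_2 x,1\}$ keeps within $O(D\log\frac nD\log\log\log\frac nD)$. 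Second, $\mu_{\max}=O(\log n\log^2\log n)$ (the \textbf{General-Broadcast} fallback at $\delta\approx n$), so the additive $91\mu_{\max}\log n=O(\log^2 n\log^2\log n)$ term of Lemma~\ref{lem:layeran} is, precisely in the regime $D\ge\log n(\log\log n)^2$, at most $O(D\log\frac nD\log\log\log\frac nD)$. Feeding $N$ and $\mu_{\max}$ into Lemma~\ref{lem:layeran} and choosing $\cv$ large enough that the analyzed iteration is sufficiently long gives the running time.

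I expect two points to demand the most care. The delicate modeling point is justifying that the in-neighbors driving the informing of $u$ may be treated as lying at a common distance $\approx d$, so that a single global guess $x_j$ calibrates all their transmission probabilities simultaneously: in a directed network $u$ may have in-neighbors at distance far exceeding $d$, but these activate only once the broadcast wave reaches them, hence remain silent during the short window in which $u$ is informed with good probability and so do not interfere. The genuine analytic obstacle, however, is the optimization bounding $N$: extracting exactly the $\log\log\log\frac nD$ factor hinges on the cancellation $\sum_{d_0<d\le D}\frac1d\asymp\log\log(T/D)$, and on checking that neither the even allocation nor the adversarial overloading of a few deep layers beats the target — this is where the choice of $d_0$, the distance-dependent exponent of the transmission probability, and the budget constraint $\sum_d\delta_d\le n$ must all be balanced against one another.
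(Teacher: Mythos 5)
Your overall architecture coincides with the paper's: the same iteration of $T$, the same case split (fall back to \textbf{General-Broadcast} when $d<D/\log^2\log\frac nD$ or $\delta>(\frac nD)^2$, use \textbf{Deep-Broadcast} otherwise), the same three-part bound on $N$ with the harmonic-sum cancellation $\sum_{d_0<d\le D}\frac 1d=\Theta(\log\log\log\frac nD)$, and the same absorption of the $91\mu_{max}\log n$ term using $D\ge\log n(\log\log n)^2$. However, there is one genuine gap, precisely at the point you flagged as the ``delicate modeling point,'' and the justification you offer for it does not hold. You calibrate the window for $x_j$ as $[a\log(2\delta),a\log(4\delta)]$ with a single $a=\civ d\log\log\frac Td$, which presumes every active in-neighbor $w$ of $u$ uses the exponent determined by $d_w\approx d$. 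In a directed network an in-neighbor $w$ of a node at distance $d$ can sit at any distance $d_w\in[d-1,D]$, and your claim that far in-neighbors ``remain silent during the short window'' is false: a node's activation time is governed by when \emph{some} path of informed nodes reaches it, not by its distance along the path under consideration, so by the time the relevant layer becomes leading, in-neighbors at distance much larger than $d$ may well be active. Worse, such nodes transmit with \emph{higher} probability (larger $d_w$ shrinks the exponent $x_j/(\civ d_w\log\log\frac{T}{d_w})$), so at the $x_j$ you calibrate for distance-$d$ neighbors the probability sum can greatly exceed $\frac 12$, collisions dominate, and the $f4^{-f}$ bound of Lemma~\ref{lem:colhit} degrades.

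The paper's proof closes exactly this hole with an intermediate-value argument that you would need to reproduce: the heterogeneous sum $S(x)=\sum_{w\in\Delta}2^{-x/(\civ d_w\log\log(T/d_w))}$ is continuous and monotone in $x$, exceeds $\frac12$ at $x=1$ (using $d_w\ge d-1$) and is below $\frac12$ at $x=T$ (using $d_w\le D$ and $\delta\le(\frac nD)^2$), so it crosses $\frac12$ at some $x'$; and since every $d_w\ge d-1$, the sum decays by at most a constant factor over a window of length $\Theta(\civ d\log\log\log\frac nD)$ starting at $x'$, which is what yields the hit probability $\Theta(d\log\log\log\frac nD/T)$ without ever assuming the in-neighbors share a common distance. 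With that substitution your argument goes through; without it, the per-step success probability for \textbf{Deep-Broadcast} is not established.
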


\begin{proof}
	We consider the iteration in which \[\civ^2 D\log \frac nD \log\log\log \frac nD \leq T \leq 2\civ^2 D\log \frac nD \log\log\log \frac nD\enspace.\] Fix a time-step $j$, and let $v$ be an uninformed node distance with a set $\Delta$ of informed neighbors, $\delta=|\Delta|$. Denote by $d$ $v$'s distance from the source.
	
	If $d<\frac{D}{\log^2\log\frac nD}$ or $\delta>(\frac nD)^2$, we analyze \textbf{General-Broadcast} time-steps, and conclude, as in the proof of Lemma \ref{lem:B12}, that the time taken to inform $v$ is stochastically majorized by a geometric random variable with parameter $\frac{1}{36C\log\delta\log^2\log \delta}$.
	
	Otherwise, we analyze \textbf{Deep-Broadcast} time-steps:
	
	There is some real value $x'\in [1,T]$ such that $\sum_{u\in\Delta}p_{4,T,d_u}(x') = \frac 12$, since the value of this sum is continuous in $x$, is at least 
	\[\sum_{u\in \Delta}2^{-\frac{1}{\civ d_u\log\log \frac {T}{d_u}}}\geq 2^{-\frac{1}{\civ (d-1)\log\log \frac {T}{d-1}}}> 2^{-1}=\frac 12\enspace,\]
	when $x=1$, and is at most 
	\begin{align*}
	\sum_{u\in \Delta}2^{-\frac{T}{\civ d_u\log\log \frac {T}{d_u}}}&\leq \sum_{u\in\Delta}2^{-\frac{\civ^2 D\log\frac nD \log\log\log \frac nD}{D\log\log \frac {\civ^2 D\log\frac nD \log\log\log \frac nD}{D}}}\\
	&\leq  \sum_{u\in \Delta}2^{-\frac{\civ^2 D\log\frac nD \log\log\log \frac nD}{2D\log\log \log\frac nD }}\\
	&\leq(\frac nD)^2\cdot  2^{-\frac 12 \civ^2\log\frac nD}= 2^{(2-\frac 12 \civ^2)\log\frac nD} \\&\leq 2^{-1} = \frac 12 \enspace,
	\end{align*} 
	
	when $x = T$.
	
	Then the value of the sum at $x'+\civ d\log\log\log \frac nD+1$ is at least 
	
	\begin{align*}
	\sum_{u\in \Delta}2^{-\frac{x'+\civ d\log\log\log \frac nD+1}{\civ d_u\log\log \frac {T}{d_u}}}&\geq\sum_{u\in \Delta}2^{-\frac{x'}{\civ d\log\log \frac {T}{d}}}\cdot 2^{-\frac{\civ d\log\log\log \frac nD+1}{\civ (d-1)\log\log \frac {T}{d-1}}}\\
	&\geq  2^{-1-\frac{d\log\log\log \frac nD+1}{(d-1)\log\log \frac {T}{d-1}}}\\
	&\geq  2^{-1-\frac{2d\log\log\log \frac nD}{d\log\log \log\frac nD}} =\frac 18\enspace.
	\end{align*} 
	
	If $x_j$ is chosen to be any of the integer values between $x'$ and $x'+\civ d\log\log\log \frac nD+1$, then by Lemma \ref{lem:colhit}, the probability that $v$ is informed is at least $\frac 18 \cdot 4^{-\frac 18} \geq \frac{1}{10}$. So, the overall probability that $v$ is informed in time-step $j$ is at least 
	\[\frac 1C \cdot \frac{1}{10} \cdot \frac{\civ d\log\log\log \frac nD}{T} \geq \frac{\civ d\log\log\log \frac nD}{20C\civ^2 D\log \frac nD \log\log\log \frac nD} = \frac{d}{20C\civ D\log \frac nD}\enspace.\]
	
	We are now in a position to apply Lemma \ref{lem:layeran} with \[\mu(d,\delta)=\begin{cases}
	\frac{20C\civ D\log \frac nD}{d},&\text{ if } d\geq\frac{D}{\log^2\log\frac nD} \text{ and } \delta\leq(\frac nD)^2\\
	36C\log\delta\log^2\log \delta)&\text{ otherwise. }
	\end{cases}
	\]
	
	We can bound $N$, the maximum of $\sum_{d=1}^{D}\mu(d,\delta_d)$, subject to $\sum_{d=1}^{D} \delta_d \leq n$, as follows: the maximum contribution of terms falling under the first case is at most
	
	\begin{align*}
	\sum_{d=\frac{D}{\log^2\log\frac nD}}^{D} \frac{20C\civ D\log \frac nD}{d}&\leq 20C\civ D\log \frac nD(1+\int_{\frac{D}{\log^2\log\frac nD}}^{D} \frac{di}{i})\\
	&\leq 20C\civ D\log \frac nD (1+\ln(\log^2\log\frac nD))\\
	&\leq 48C\civ D\log \frac nD\log\log\frac nD\enspace,
	\end{align*}
	
	the maximum contribution of terms where $d<\frac{D}{\log^2\log\frac nD}$ is at most 
	\begin{align*}
	\frac{D}{\log^2\log\frac nD} \cdot 36C\log\frac{n\log^2\log\frac nD}{D}\log^2\log \frac{n\log^2\log\frac nD}{D}&\leq \frac{37CD\log\frac nD\log^2\log \frac nD}{\log^2\log\frac nD}\\
	&=37CD\log\frac nD	\enspace,
	\end{align*}
	and the maximum contribution of terms where $\delta\geq(\frac nD)^2$ is at most
	\begin{align*}
	\frac{D^2}{n} \cdot 36C\log(\frac nD)^2\log^2\log (\frac nD)^2&\leq  73CD\cdot \frac Dn\log\frac nD \log^2\log\frac nD\leq 73CD\enspace.
	\end{align*}
	
	So, summing these contributions,
	
	\begin{align*}
	N&\leq 48C\civ D\log \frac nD\log\log\frac nD+37CD\log\frac nD+73CD
	\leq 50C\civ D\log \frac nD\log\log\frac nD\enspace.
	\end{align*}
	
	We can also see that the maximum $\mu$ value $\mu_{max}$ is at most $20C\civ\log n \log^2\log n$. So, applying Lemma \ref{lem:layeran}, we can conclude that the network will be informed within 
	\begin{align*}
	4 N + 91 \mu_{max}\log n &\leq 200C\civ D\log \frac nD \log\log\log \frac nD + 1820C\civ\log^2 n \log^2\log n\\
	&\leq 200C\civ D\log \frac nD \log\log\log \frac nD + 1820C\civ \cdot 2D\log \frac nD\\
	&\leq 3840C\civ D\log \frac nD \log\log\log \frac nD
	\end{align*} time with high probability. Here in the second inequality, we are using that $D\geq \log n(\log\log n)^2$. We set $\civ = 3840C$ to ensure that the iteration we analyze is sufficiently long, and thus broadcast is performed in $O(D\log \frac nD \log\log\log \frac nD)$ time.
	
\end{proof}
\begin{theorem}
	Broadcasting can be performed in networks with collision detection in $O(D\log \frac nD \log\log\log \frac nD + \log^2 n)$ time, with high probability.
\end{theorem}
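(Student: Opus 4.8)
The plan is to run the framework of Algorithm~\ref{alg:B0} with all four protocols simultaneously, i.e.\ with $C=4$: Shallow-Broadcast, General-Broadcast, Semi-Shallow-Broadcast, and Deep-Broadcast, after first running the beep-wave procedure (time-multiplexed with the main algorithm during, say, odd steps) so that every node $v$ knows its distance $d_v$ from the source. Since each of Lemmas~\ref{lem:B11}--\ref{lem:B22} is proved for a generic constant $C$ and merely fixes the relevant constant ($\ci,\cii,\ciii,\civ$) as a multiple of $C$, instantiating $C=4$ leaves every one of their guarantees intact; each time-step is assigned to a given protocol with probability $\frac14$, which is exactly what those proofs assume.

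The proof then proceeds by a case analysis on $D$, noting that the three intervals $D\le\log n$, $\log n\le D\le\log n(\log\log n)^2$, and $D\ge\log n(\log\log n)^2$ cover all values of $D$, with the boundaries touching exactly. In the first case, Lemma~\ref{lem:B11} gives broadcast in $O(\log^2 n)$ time via Shallow-Broadcast, which is absorbed into the additive $\log^2 n$ term of the target bound. In the third case, Lemma~\ref{lem:B22} gives $O(D\log\frac nD\log\log\log\frac nD)$ directly, matching the leading term. The only case needing a short additional argument is the middle one, handled by Semi-Shallow-Broadcast: here Lemma~\ref{lem:B21} yields $O(D\log n\log\log\log n)$, and I would observe that when $D\le\log n(\log\log n)^2$ we have $\log D=O(\log\log n)$, so $\log\frac nD=\log n-\log D=\Theta(\log n)$ and hence $\log\log\log\frac nD=\Theta(\log\log\log n)$; this converts the Semi-Shallow bound into $O(D\log\frac nD\log\log\log\frac nD)$, again matching the target.

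Finally I would account for the overhead of the doubling framework. Because $T$ doubles each iteration, the total time elapsed through the iteration in which $T$ first exceeds the threshold required by the applicable lemma is a geometric sum dominated by its last term, so it stays within a constant factor of the time that lemma guarantees; the beep-wave preprocessing and the time multiplexing used to run it alongside the main algorithm likewise cost only a constant factor. Taking a union bound over the constantly many cases (and, inside each lemma, over all target nodes) preserves the high-probability guarantee. The main obstacle is not any single case but ensuring the three diameter ranges genuinely tile all of $[1,\infty)$ with no gap — in particular that the Semi-Shallow interval closes the gap between Shallow (which needs $D\le\log n$) and Deep (which needs $D\ge\log n(\log\log n)^2$) — together with the elementary but necessary verification that $\log\frac nD$ and $\log n$ are interchangeable up to constants throughout that middle interval.
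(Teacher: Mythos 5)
Your proposal is correct and follows essentially the same route as the paper's own proof: run the beep-wave (time-multiplexed) so nodes learn $d_v$, instantiate Algorithm~\ref{alg:B0} with all four protocols ($C=4$), and split into the three diameter ranges covered by Lemmas~\ref{lem:B11}, \ref{lem:B21} and \ref{lem:B22}, using that $\log\frac nD=\Theta(\log n)$ in the middle range to rewrite the Semi-Shallow bound. Your added justifications (the tiling of the ranges and the geometric-sum accounting for the doubling parameter) are details the paper leaves implicit, not a different argument.
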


\begin{proof}
	We perform a beep-wave to ensure that all nodes $v$ know their distance $d_v$ from the source, and then apply the framework of Algorithm \ref{alg:B0} using \textbf{Shallow-Broadcast}, \textbf{General-Broadcast}, \textbf{Semi-Shallow-Broadcast}, and \textbf{Deep-Broadcast}, i.e. $C=4$. If $D\leq \log n$, then by Lemma \ref{lem:B11} we complete broadcasting in $O(\log^2 n)$ time. If $\log n \leq D\leq \log n\log^2\log n$, then by Lemma \ref{lem:B21} we complete broadcasting in $O(D\log n \log\log\log n) = O(D\log \frac nD \log\log\log \frac nD)$ time. If $D > \log n\log^2\log n$, then by Lemma \ref{lem:B22} we complete broadcasting in $O(D\log \frac nD \log\log\log \frac nD)$ time. This gives a total asymptotic running time of $O(D\log \frac nD \log\log\log \frac nD + \log^2 n)$.
\end{proof}

\subsection{Undirected networks with collision detection}
In undirected networks in which collision detection is available, the fastest algorithm with known network parameters is the $O(D+\log^6 n)$-time result of \cite{-GHK13}. This algorithm involves utilizing beep-wave type methods to set up a structure known as a \emph{gathering spanning tree}, which arose in work on known-topology radio networks \cite{-GPX05} and admits a fast broadcasting schedule atop it. To achieve the $O(D+\log^6 n)$-time bound, constant-factor upper bounds on $D$ and $\log n$ are required. However, since the running time does not contain a product of these two quantities, the algorithm can be simulated without parameter knowledge by using a doubling parameter $T$, where $T$ is used as an upper bound for both $D$ and $\log^6 n$, and the algorithm is run for $T$ time in each iteration. Then, when $T$ exceeds both $D$ and $\log^6 n$, which happens within $O(D+\log^6 n)$ time, the algorithm will succeed.

\section{Conclusions}
We have presented the first randomized broadcasting algorithms for \emph{blind} radio networks. Since this is a new angle of research in radio networks, there are several interesting avenues for further research. Firstly, one would like to close the gap between the $O(D\log\frac nD\log^2\log\frac nD + \log^2 n)$ running time presented here for networks without collision detection and the $\Omega(D\log\frac nD + \log^2 n)$ lower bound. An improved lower bound would be of particular interest, since it would demonstrate that parameter knowledge does admit faster algorithms. The same applies to directed networks with collision detection, though here the first step would be to prove whether the $\Omega(D\log\frac nD + \log^2 n)$ lower bound still holds in this model.

A second possible extension to our work here is to achieve \emph{acknowledged} broadcasting, i.e. to ensure that by some time-step, all nodes know that broadcasting has been successfully completed and can cease transmissions. This would solve the issue with our algorithms here that nodes must continue transmissions indefinitely. However, it is not clear if acknowledged broadcasting is possible in this model; Chlebus et al. \cite{-CGGPR00} show that, using deterministic algorithms and without collision detection, it is not.

\newcommand{\Proc}{Proceedings of the\xspace}
\newcommand{\STOC}{Annual ACM Symposium on Theory of Computing (STOC)}
\newcommand{\FOCS}{IEEE Symposium on Foundations of Computer Science (FOCS)}
\newcommand{\SODA}{Annual ACM-SIAM Symposium on Discrete Algorithms (SODA)}
\newcommand{\COCOON}{Annual International Computing Combinatorics Conference (COCOON)}
\newcommand{\DISC}{International Symposium on Distributed Computing (DISC)}
\newcommand{\ESA}{Annual European Symposium on Algorithms (ESA)}
\newcommand{\ICALP}{Annual International Colloquium on Automata, Languages and Programming (ICALP)}
\newcommand{\IPL}{Information Processing Letters}
\newcommand{\JACM}{Journal of the ACM}
\newcommand{\JALGORITHMS}{Journal of Algorithms}
\newcommand{\JCSS}{Journal of Computer and System Sciences}
\newcommand{\PODC}{Annual ACM Symposium on Principles of Distributed Computing (PODC)}
\newcommand{\SICOMP}{SIAM Journal on Computing}
\newcommand{\SPAA}{Annual ACM Symposium on Parallelism in Algorithms and Architectures (SPAA)}
\newcommand{\STACS}{Annual Symposium on Theoretical Aspects of Computer Science (STACS)}
\newcommand{\TALG}{ACM Transactions on Algorithms}
\newcommand{\TCS}{Theoretical Computer Science}

\bibliographystyle{plain}



\bibliography{nk}

\end{document}